\pdfoutput=1
\documentclass{llncs}
\usepackage{xspace}
\usepackage{xcolor}

\usepackage{amssymb}
\usepackage{amsmath}
\usepackage{booktabs}
\usepackage{footnote}
\usepackage{microtype}
\newcommand{\comprehension}[2]{\ensuremath{\left\{ {#1} \;|\; {#2}\right\}}}

\newcommand{\limp}{\to}

\newtheorem{observation}{Observation}
\usepackage{rotating}
\usepackage{subcaption}
\usepackage{tikz}


\def\miko{Mikol\'a\v{s} Janota}
\def\jpms{Joao Marques-Silva}
\def\rgrig{Radu Grigore}
\def\theTitle{On QBF Proofs and Preprocessing%
}
\title{\theTitle}
\author{{\miko} \inst{1}\and {\rgrig}\inst{3}\and {\jpms}\inst{1,2}}
\institute{INESC-ID, Lisbon, Portugal %
\and University College Dublin, Ireland %
\and University of Oxford, UK}
\definecolor{citeblue}{rgb}{0.1,0,.4}
\definecolor{refcolor}{rgb}{0,0,0.4}
\definecolor{midgreen}{RGB}{0,150,0}
\definecolor{darkgreen}{RGB}{0,128,0}
\definecolor{darkblue}{RGB}{0,0,128}
\definecolor{darkred}{RGB}{192,0,0}
\usepackage[%
colorlinks=true%
,bookmarks=true%
,linkcolor=citeblue%
,citecolor=citeblue%
,urlcolor=blue%
,plainpages=false]{hyperref}


\def\satelite{\xspace{\sf SatELite}\xspace}

\DeclareMathOperator*{\vars}{{\sf var}}
\DeclareMathOperator*{\union}{\cup}

\DeclareMathOperator*{\Pref}{{\mathcal P}}
\DeclareMathOperator*{\Cert}{{\mathcal C}}

\overfullrule=5pt
\tolerance=-1
\pretolerance=200
\hbadness=\pretolerance

\begin{document}
\maketitle
\setcounter{footnote}0 
\begin{abstract}
QBFs (quantified boolean formulas), which are a superset of propositional
formulas, provide a canonical representation for PSPACE problems.
To overcome the inherent complexity of QBF, significant
effort has been invested in developing QBF solvers as well as the
underlying proof systems.
At the same time, formula preprocessing is crucial for the application
of QBF solvers. This paper focuses on a missing link in
currently-available technology: How to obtain a certificate
(e.g.\ proof) for a formula that had been preprocessed before it was
given to a solver? The paper targets a suite of commonly-used
preprocessing techniques and shows how to reconstruct certificates for them.
On the negative side, the paper discusses certain limitations of the
currently-used proof systems in the light of preprocessing. The
presented techniques were implemented and evaluated in the
state-of-the-art QBF preprocessor~bloqqer.

\end{abstract}
\section{Introduction}


Preprocessing~\cite{DBLP:conf/sat/GiunchigliaMN10,DBLP:conf/cp/SamulowitzDB06,DBLP:conf/sat/SamulowitzB06,DBLP:conf/cade/BiereLS11} and certificate generation~\cite{DBLP:conf/lpar/Benedetti04,DBLP:conf/cade/Benedetti05,DBLP:conf/sat/JussilaBSKW07,DBLP:series/faia/GiunchigliaMN09,DBLP:journals/aicom/NarizzanoPPT09,DBLP:conf/sat/NiemetzPLSB12} are both active areas of research related to QBF solving.
Preprocessing makes it possible to solve many more problem instances.
Certification ensures results are correct, and certificates are themselves useful in applications.
In this paper we show how to generate certificates while preprocessing is used.
Hence, it is now possible to certify the answers for many more problem instances than before.

QBF solvers are practical tools that address the standard
PSPACE-complete problem: given a closed QBF, decide whether it is true.
In principle, such solvers can be applied to any PSPACE problem, of which there are many; for example, model checking in first-order logic~\cite{stockmeyer1974}, satisfiability of word equations~\cite{DBLP:journals/jacm/Plandowski04}, the decision problem of the existential theory of the reals~\cite{DBLP:conf/stoc/Canny88}, satisfiability for many rank-1 modal logics~\cite{DBLP:journals/tocl/SchroderP09}, and so on~\cite{DBLP:series/faia/GiunchigliaMN09,DBLP:journals/jsat/BenedettiM08,DBLP:journals/entcs/JussilaB07}.
Unlike SAT solvers (for NP problems), QBF solvers are not yet routinely used in practice to solve PSPACE problems: they need to improve.

Fortunately, QBF solvers do improve rapidly~\cite{qbfgallery2013}.
One of the main findings is that a two-phase approach increases considerably the number of
instances that can be solved in practice: in the first phase, \emph{preprocessing}, a range of
fast techniques is used to simplify the formula; in the second phase, actual solving, a
complete search is performed.
Another recent improvement is that QBF solvers now produce \emph{certificates}, which include the true\slash false answer together with a justification for it.
Such a justification can be for example in the form of a proof of the given formula.
Certificates ensure that answers are correct, and are sometimes necessary for other reasons.
For example, certificates are used to suggest repairs in QBF-based diagnosis~\cite{ng_thesis,DBLP:conf/sat/StaberB07,DBLP:conf/fmcad/SamantaDE08}.

Clearly, both preprocessing and certificate generation are desirable.
Alas, no tool-chain supports both preprocessing and certificate generation at the same time.
This paper shows how to reconstruct certificates in the presence of a wide range of preprocessing techniques.
In our setup (\autoref{fig:architecture}), the preprocessor produces a simplified formula together with a \emph{trace}.
After solving, we add a postprocessing step, which uses the trace to reconstruct a certificate for the original formula out of a certificate for the simplified formula.

\begin{figure}[t]\centering\scriptsize
\begin{tikzpicture}[xscale=4,yscale=.8]
\tikzset{code/.style={rectangle,draw,text width=11em,text height=2ex,text depth=0.5ex,fill=yellow!20,align=center}}
\tikzset{data/.style={code,rounded corners=5pt,fill=black!10}}
\tikzset{arr/.style={thick,->}}
\node (ans1) at (0,6) {correct/incorrect};
\node (ans2) at (0,0) {correct/incorrect};
\node[code] (check1) at (0,5) {checker};
\node[code] (prepro) at (-1,3) {preprocessor};
\node[code] (backpo) at (1,3) {postprocessor};
\node[code] (solver) at (0,2) {solver};
\node[code] (check2) at (0,1) {checker};
\node[data] (bigq) at (-1,4) {original QBF};
\node[data] (bigc) at (1,4) {reconstructed certificate};
\node[data] (trace) at (0,3) {trace};
\node[data] (smallq) at (-1,2) {simplified QBF};
\node[data] (smallc) at (1,2) {simple certificate};
\draw[arr] (check1) -- (ans1);
\draw[arr] (bigq) |- (check1); \draw[arr] (bigc) |- (check1);
\draw[arr] (bigq) -- (prepro); \draw[arr] (backpo)--(bigc);
\draw[arr] (prepro) -- (trace); \draw[arr] (trace) -- (backpo);
\draw[arr] (prepro) -- (smallq); \draw[arr] (smallc) -- (backpo);
\draw[arr] (smallq) -- (solver); \draw[arr] (solver) -- (smallc);
\draw[arr] (smallq) |- (check2); \draw[arr] (smallc) |- (check2);
\draw[arr] (check2)--(ans2);
\end{tikzpicture}
\caption{Architecture}
\label{fig:architecture}
\end{figure}
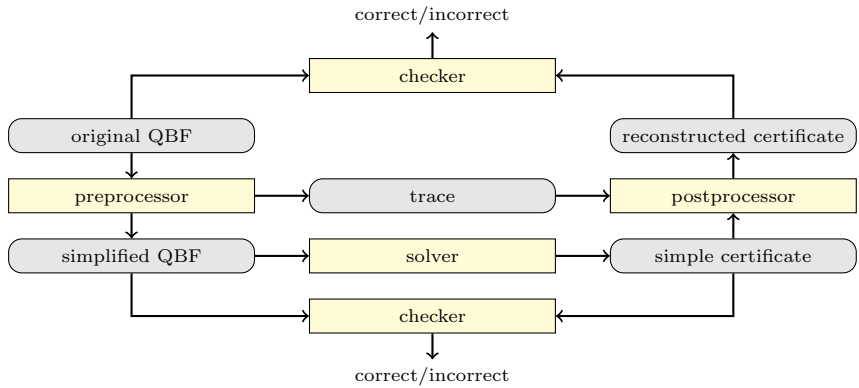

The contributions of this paper are the following:
\begin{itemize}
\item a review of many preprocessing techniques used in practice (\autoref{section:techniques})
\item a negative result about the reconstruction of term resolution-based certificates
(\autoref{section:limitations})
\item certificate reconstruction techniques, corresponding to a wide range of formula preprocessing techniques (\autoref{section:reconstruction})
\item an efficient implementation, and its experimental evaluation (\autoref{section:experiments})
\end{itemize}


\section{Preliminaries}

A {\em literal\/} is a Boolean variable or its negation.  For a literal
$ l $, we write $\bar l $ to denote the literal {\em complementary\/} to
$ l $, i.e.\ $\bar x=\lnot x$ and $\overline{\lnot x}=x$; we write $\vars(l)$ for $x$.
A {\em clause\/} is a disjunction of literals.  A formula
in {\em conjunctive normal form\/} (CNF) is a conjunction of
clauses. Whenever convenient, a clause is treated as a set
of literals, and a CNF formula as a set of sets of clauses.  Dually to
a clause, a {\em term\/} is a conjunction of literals.  A
formula in {\em disjunctive normal form\/} (DNF) is a conjunction of
terms.

For a set of variables $X$, an {\em assignment\/} $\tau$ is a function from $X$
to the constants $0$~and~$1$. We say that $\tau$ is {\em
  complete\/} for $X$ if the function is total.



{\em Substitutions\/} are denoted as $\psi_1/x_1,\dots,\psi_n/x_n$, with
$x_i\neq x_j$ for $i\neq j$. An
application of a substitution is denoted as
$\phi[\psi_1/x_1,\dots,\psi_n/x_n]$ meaning that variables~$x_i$ are
simultaneously substituted with corresponding formula~$\psi_i$ in~$\phi$.

%

{\em Quantified Boolean Formulas\/}
(QBFs)~\cite{DBLP:series/faia/BuningB09} extend propositional logic
with quantifiers that have the standard semantics: $\forall x.\,\Psi$
is satisfied by the same truth assignments as
$\Psi[0/x]\land\Psi[1/x]$, and $\exists x.\,\Psi$ as
$\Psi[0/x]\lor\Psi[1/x]$.
 Unless specified otherwise,   QBFs are in
{\em  closed prenex\/}  form, 
i.e.\ in the form
\hbox{${\cal Q}_1 x_1 \dots {\cal Q}_k x_k.\,\phi$},
where
$x_i$ form a nonrepeating sequence of variables and
${\cal Q}_i\in\{\exists,\forall\}$;
 the formula $\phi$   is  over the variables  $\{x_1,\dots,x_k\}$.
 The propositional part $\phi $ is called the {\em matrix\/} and the rest the {\em prefix}.
If additionally the matrix is in CNF, we say that the formula is in QCNF.
%
A prefix $\Pref$ induces {\em ordering on
  literals\/}~\cite{Buning:1999}: for literals $l_1$,
$l_2$ we write $l_1<l_2$ and say that $l_1$ is {\em less than\/} $l_2$
if $\vars(l_1)$ appears before $\vars(l_2)$ in $\Pref$.

A closed QBF is {\em false\/} (resp.\ {\em true\/}), iff
it is semantically equivalent to the constant $0$ (resp.\ $1$).
 If a variable is universally quantified, we say that the variable is
 \emph{universal}.  For a literal $l$ and a universal variable $x$ such
 that $\vars(l)=x$, we say that~$l$ is \emph{universal}.
 Existential variable and literal are defined analogously.

\subsection{QU-resolution}\label{sec:Qresolution}

{\em QU-resolution\/}~\cite{DBLP:conf/cp/Gelder12} is a
calculus for showing that a QCNF is false. It comprises two operations,
{\em resolution\/} and {\em $\forall$-reduction}.  Resolution is defined
for two clauses $C_1\lor x$ and $C_2\lor\bar x$ such that $C_1\union C_2$
does not contain complementary literals nor any of the
literals $x$, $\bar x$. The {\em QU-resolvent\/} (or simply resolvent) of such
clauses is the clause $C_1\lor C_2$.
The {\em $\forall$-reduction\/} operation removes from a clause~$C$
all universal literals~$l$ for which there is {\em no\/} existential literal $k\in C$ s.t.\
$l<k$.

For a QCNF $\Pref.\,\phi$, a {\em QU-resolution proof\/} of a clause $C$
is a finite sequence of clauses $C_1,\dots,C_n$ where $C_n=C$ and any
$C_i$ in the sequence is part of the given matrix $\phi$; or it is a
QU-resolvent for some pair of the preceding clauses; or it was
obtained from one of the preceding clauses by $\forall$-reduction.  A
QU-resolution proof is called a {\em refutation\/} iff $C$ is the empty
clause.

QU-resolution is a slight extension of
Q-resolution~\cite{DBLP:journals/iandc/BuningKF95}. Unlike
QU-resolution, Q-resolution does {\em not\/} enable resolving on universal
literals. While Q-resolution is on its own refutationally complete for QCNF,
resolutions on universal literals are useful in certain situations
(see also~\cite{EglyWidlQBFW13}).

\subsection{Term-Resolution and Model Generation}
\label{sec:tresd}
{\em Term-resolution\/} is analogous to Q-resolution with the difference
that it operates on terms and its purpose is to prove that a QBF is
true~\cite{GiunchigliaEtAlJAIR06}. Resolution is defined for two terms
$T_1\land x$ and $T_1\land\bar x$ where $T_1\union T_2$ do not contain
any complementary literals nor any of the literals $x$, $\bar x$; the
resolvent is the term $T_1\land T_2$. The {\em $\exists$-reduction\/}
operation removes from a term~$T$ all existential literals~$l$ such that there is
{\em no\/} universal literal $k\in T$ with~$l<k$.

Since term-resolution is defined on terms, i.e.\ on DNF, the {\em
  model generation\/} rule is introduced in order to enable generation
of terms from a CNF matrix.  For a QCNF $\Phi = \Pref.\,\phi$, a term $
T$ is generated by the model generation rule if for each clause $C$
there is a literal $l$ s.t.\ $l\in C$ and $l\in T$. Then, a {\em
  term-resolution proof\/} of the term $T_m$ from $\Phi$ a is a finite
sequence $T_1,\dots,T_m$ of terms such that each term $T_i$ was
generated by the model generation rule; or it was obtained from the
previous terms by $\exists$-reduction or term-resolution. Such proof
{\em proves\/} $\Pref.\,\phi$ iff $T_m$ is the empty term.
(Terms are often referred to as `cubes', especially in the
context of DPLL QBF solvers that apply cube learning.) In the
remainder of the article, whenever we talk about term-resolution proofs
for QCNF, we mean the application of the model generation and
term-resolution rule.
A QCNF formula is true iff it has a
term-resolution proof~\cite{GiunchigliaEtAlJAIR06}.

In this paper, both term-resolution and QU-resolution proofs are
treated as connected directed acyclic graphs so that the each clause/term
in the proof corresponds to some node labeled with that clause/term.

\subsection{QBF as Games}
The semantics of QBF can be stated as a game between an {\em
  universal\/} and an {\em existential
  player\/}~\cite{DBLP:books/daglib/0023084}. The universal player
assigns values to universal variables and analogously the existential
player assigns values to the existential variables. A player assigns a
value to a variable if and only if all variables preceding it in the
prefix were assigned a value. The universal player wins if under the
complete resulting assignment the underlying matrix evaluates to false
and the existential player wins if the underlying matrix evaluates to
true.  A formula is true iff there exists a winning strategy for the
existential player.
The notion of strategy was formalized into {\em models of QBF\/}~\cite{DBLP:journals/jar/BuningSZ07}.

\begin{definition}[Strategy and Model]\label{def:model}
  Let $\Phi = \Pref.\,\phi$ be QBF with the universal variables
  $u_1,\dots,u_n$ and with the existential variables $e_1,\dots,e_m$.  A {\em
    strategy} $M$ is a sequence of propositional formulas
  $\psi_{e_1},\dots,\psi_{e_m}$ such that each $\psi_{e_i}$ is over
  the universal variables preceding $e_i$ in the quantification order.
  We refer to the formula $\psi_x$ as the {\em definition\/} of $x$ in $M$.

  A strategy~$M$ is a {\em model\/} of $\Phi$ if and only if the
  following formula is true
    \[\forall u_1,\dots,u_n.\,\phi[\psi_{e_1}/e_1,\dots,\psi_{e_m}/e_m]\]
  i.e.,\ $\phi[\psi_{e_1}/e_1,\dots,\psi_{e_m}/e_m]$ is a tautology.
\end{definition}

\paragraph {\bf Notation.}
{\it  Let $\Phi=\Pref.\,\phi$ be a QBF as in \autoref{def:model} and
  $M=(\psi_{e_1},\dots,\psi_{e_m})$ be a strategy.
  For a formula~$\xi$ we write $M(\xi)$ for the formula
  $\xi[\psi_{e_1}/e_1,\dots,\psi_{e_m}/e_m]$.
   For a total assignment $\tau$ to the universal variables $U =
   u_1,\dots,u_n$, we write $M(\xi,\tau)$ for $M(\xi)[\tau(u_1)/u_1,\dots,\tau(u_n)/u_n]$.
   Intuitively, $M(\xi,\tau)$ is the result of the game under strategy $M$ and the moves $\tau$.
   Hence, if $\xi$ is over the variables of $\Phi$, then $M(\xi)$ is
   over~$U$ and $M(\xi,\tau)$ yields the constant which results from
   evaluating $\xi$ under the strategy $M$ and assignment $\tau$. In
   particular, $M$ is a model of $\Phi$ iff $M(\xi,\tau)=1$ for
   any~$\tau$.}

\begin{example}
  For a QCNF $\forall u\exists e.\, (\bar u\lor e)\land(u\lor\bar e)$,
  the strategy $M = (\phi_e)$, where $\phi_e = u$ is a model.
  Observe that $M(\bar u\lor e)= M(u\lor\bar e)=u\lor\bar u$ are
  tautologies.
\end{example}

A formula QCNF is true if and only if it has a
model~\cite[Lemma~1]{DBLP:journals/jar/BuningSZ07};  deciding whether
a strategy is a model of a formula is
coNP-complete~\cite[Lemma 3]{DBLP:journals/jar/BuningSZ07}.
We should note that here we follow the definition of model by
B\"uning~{\it et.~al.}, which has a syntactic nature. However,
semantic-based definitions of the same concept appear in
literature~\cite
{DBLP:conf/sat/JussilaBSKW07,DBLP:journals/fmsd/BalabanovJ12}.

\section{QBF Preprocessing Techniques}\label{section:techniques}

For the following overview of preprocessing techniques we consider a
QCNF $\Pref.\,\phi$ for some quantifier prefix $\Pref$ and a CNF
matrix $\phi$.  All the techniques are validity-preserving.

Let $C\in\phi$ be a clause comprising a single existential literal~$l$.
{\em Unit propagation\/} is the operation of removing from~$\phi$
all clauses that contain~$l$, and removing the literal~$\bar l$
from clauses containing it.

A clause $C\in\phi$ is {\em subsumed\/} by a different clause $D\in\phi$
if $D\subseteq C$; {\em subsumption removal\/} consists in removing
clause $C$.

Consider clauses $C,D\in\phi$ together with their resolvent $R$.  If
$R$ subsumes $C$, then we say that $C$ is {\em strengthened\/} by
{\em self-subsumption\/} using $D$. {\em Self-subsumption strengthening\/}
consists in replacing $ C $ with $R$~\cite{DBLP:conf/sat/EenB05}.

A literal $l$ is {\em pure\/} in $\Phi$ if $\bar l$ does not appear
in~$\phi$.  If $l$ is pure and universal, then the {\em pure literal
  rule\/} (PRL)~\cite{DBLP:journals/jar/CadoliSGG02} consists in removing all
occurrences of $l$.
If $l$ is pure and existential, then the PLR removes all the clauses
containing the literal $ l $.

The technique of {\em blocked clause
  elimination\/}~(BCE)~\cite{DBLP:journals/tcs/Kullmann99,DBLP:conf/cade/BiereLS11} hinges on the
definition of a {\em blocked literal}.
An existential literal~$l$ is blocked in a clause $C$
if for any clause $D\in\phi$ s.t.\ $\bar l\in D$ there is a
literal $k\in C$ with $k<l$ and $\bar k\in D$.
A clause is blocked if it contains a blocked literal. BCE consists in removing blocked clauses from the
matrix.

{\em Variable elimination\/}~(VE)~\cite{DBLP:conf/cp/PanV04,DBLP:conf/sat/GiunchigliaMN10}
replaces all clauses containing a certain variable with all their possible resolvents
on that variable.
In QBF, to ensure soundness, the technique is carried out only if a
certain side-condition is satisfied.  For an
existential variable $x$, let us partition $\phi$ into
$\phi_x\cup\phi_{\bar x}\cup\xi$ where $\phi_x$~has all clauses
containing the literal~$x$, and $\phi_{\bar x}$~has all clauses
containing the literal~$\bar x$.
For any clause $C\in\phi_x$ that contains some literal $k$ s.t.~$x<k$
and any clause $D\in\phi_{\bar x}$, there is a literal $z<x$ s.t.\ 
$z\in C$ and $\bar z\in D$. Variable elimination consists in
replacing $\phi_x\cup\phi_{\bar x}$ with the set of
resolvents between the pairs of clauses of $\phi_x$ and $\phi_{\bar
  x}$ for which the resolution is defined.


The {\em binary implication
  graph\/}~(e.g.\ \cite{DBLP:conf/lpar/HeuleJB10}) $G_\phi$ is
constructed by generating for each binary clause $l_1\lor l_2\in\phi$
two edges: $\bar l_1\to l_2$ and $\bar l_2\to l_1$.
If two literals appear in the same strongly
connected component of $G_\phi$, then they must be equivalent.  {\em
  Equivalent literal substitution\/}~(ELS) consists in replacing
literals appearing in the same strongly connected component $S$ by one
of the literals from~$S$; this literal is called the {\em
  representative}. The representative is then substituted in place of
the other literals of $S$.
While in plain SAT preprocessing a representative
can be chosen arbitrarily, in QBF it must be done with care. First,
three conditions are checked: (1)~$S$ contains two distinct universal
literals (also covers complementary universal literals); (2)~$S$
contains an existential literal $l_e$ and a universal literal~$l_u$ such
that $l_e<l_u$; (3)~$S$ contains two complementary
existential literals. If either of the conditions (1), (2), or (3) is
satisfied, then the whole formula is
false~(cf.~\cite{DBLP:journals/ipl/AspvallPT79}), and ELS stops.
Otherwise,  ELS picks as representative the literal
that is the outermost with respect to the considered prefix. Observe
that if the component contains exactly one universal literal, it
will be chosen as the representative. All clauses that become
tautologous due to the substitution, are removed from the matrix (this
includes the binary clauses that were used to construct the strongly
connected~components).
%

\section {Limitations}
\label{section:limitations}

 In this section we focus on the limitations of currently-available
 calculi from the perspective of preprocessing. In particular, we show
 that term-resolution+model-generation proofs cannot be tractably
 reconstructed for blocked clause elimination and variable
 elimination.
 For a given parameter $n\in\mathbb{N}^+$ construct the following true QCNF
  with $2n$~variables and $2n$~clauses.


 \begin{equation}\label{equation:iff}
   \textstyle
   \forall u_1\exists e_1%
   \dots%
   \forall u_n\exists e_n.\,
  \bigwedge_{1\le i\le n} 
  \;(\bar u_i\lor e_i)\land (u_i\lor\bar e_i)
 \end{equation}

 \begin{proposition}\label{proposition:term_exponential}
    Any term-resolution proof of~\eqref{equation:iff} has size exponential in~$n$.
 \end{proposition}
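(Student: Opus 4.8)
The plan is to analyze the structure of the formula in \eqref{equation:iff} and show that any term-resolution proof must generate exponentially many distinct terms. First I observe that the matrix $\bigwedge_{1\le i\le n}(\bar u_i\lor e_i)\land(u_i\lor\bar e_i)$ encodes the conjunction of biconditionals $u_i\liff e_i$. This means that in any satisfying assignment, $e_i$ is forced to equal $u_i$. Consequently, the model-generation rule can only produce terms that are consistent with these biconditionals: for each index $i$, a generated term must select one literal from $\bar u_i\lor e_i$ and one from $u_i\lor\bar e_i$, and the only way to satisfy both clauses simultaneously for index $i$ is to include a literal pattern compatible with $u_i=e_i$.

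Next I would argue that the generated terms essentially correspond to the $2^n$ complete assignments to the universal variables $u_1,\dots,u_n$. Since each $u_i$ is universally quantified and appears in a biconditional with its own private existential $e_i$, the key point is that there is no ``sharing'' across indices --- each universal variable must genuinely be branched on. The plan is to show a lower bound by examining how $\exists$-reduction and term-resolution can combine terms: $\exists$-reduction can only remove an existential literal $e_i$ when no universal literal $k$ with $e_i<k$ remains in the term, which given the prefix ordering $\forall u_1\exists e_1\dots\forall u_n\exists e_n$ severely restricts when $e_i$ can be dropped. Because $e_i$ precedes $u_{i+1},\dots,u_n$, the literal $e_i$ cannot be $\exists$-reduced as long as any later universal literal is present, so the term must carry information about $u_i$ (via $e_i$) until quite late in the resolution process.

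The core of the argument is a standard ``strategy counting'' or ``critical assignment'' lower bound: I would show that to derive the empty term, the proof must implicitly resolve away every universal variable, and each universal variable $u_i$ requires a resolution step that merges a term containing $u_i$ with one containing $\bar u_i$. Because the biconditionals couple $u_i$ with $e_i$ and the prefix forces the $e_i$ to remain until their matching $u_i$ is resolved, the proof cannot collapse two distinct universal ``histories'' prematurely. Formalizing this, I would define for each of the $2^n$ assignments $\tau$ to the universal variables a distinct term (or a distinct node in the DAG) that must appear in the proof, and show these are pairwise distinct, yielding the $2^n$ lower bound on proof size.

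The main obstacle will be making rigorous the claim that the proof cannot economize by reusing terms across different branches --- i.e., ruling out that a clever sequence of $\exists$-reductions and resolutions produces the empty term using only polynomially many distinct terms. The difficulty is that term-resolution is a refutationally complete system with a DAG structure allowing reuse, so a naive counting argument is insufficient; I would need an invariant (for instance, tracking which universal assignments are ``covered'' by each term and showing that $\exists$-reduction and resolution can only merge coverage sets in a controlled, essentially tree-like way for this particular coupled formula) to force the exponential blowup. Establishing and maintaining this invariant through both proof rules is the crux of the argument.
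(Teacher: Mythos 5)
Your proposal correctly identifies the structure of the model-generation leaves (for every $i$ a leaf must contain either $u_i,e_i$ or $\bar u_i,\bar e_i$, hence $n$ universal literals), and it correctly aims at exhibiting, for each of the $2^n$ universal assignments, a distinct node of the proof. But it stops exactly where the proof has to start: you yourself note that ``a naive counting argument is insufficient'' because of DAG reuse, and you defer the needed invariant (``tracking which universal assignments are covered by each term'') as the crux still to be established. That invariant \emph{is} the content of the proposition; without it you have a plan, not a proof. Moreover, the machinery you sketch for it points in an unpromising direction: the restrictions on when $e_i$ can be $\exists$-reduced, and the claim that each $u_i$ ``requires a resolution step merging a term containing $u_i$ with one containing $\bar u_i$'', do not by themselves rule out a proof that shares subderivations across different universal ``histories''.

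The paper closes this gap with a short top-down argument. Say a term $T$ \emph{agrees} with a universal assignment $\tau$ iff $T$ contains no literal false under $\tau$ (formally: there is no $l$ with $\bar l\in T$ and $\tau(l)=1$). Fix $\tau$. The root (the empty term) agrees with $\tau$. If a term agreeing with $\tau$ was obtained by $\exists$-reduction, its premise also agrees, since $\tau$ assigns only universal variables. If it was obtained by resolution on a pivot $y$, then at least one premise agrees: any disagreement witness other than a pivot literal survives into the resolvent, so the witnesses would have to be the pivot literals $y$ and $\bar y$ themselves, which cannot both be false under $\tau$. Hence for every $\tau$ there is a root-to-leaf path of agreeing terms, so some leaf agrees with $\tau$. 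Since each leaf carries a literal on every $u_i$, it agrees with exactly one assignment; therefore the $2^n$ assignments force $2^n$ pairwise distinct leaves. Note that this argument is completely indifferent to how cleverly the DAG shares nodes\,---\,which is precisely what your proposed ``coverage'' invariant needed to deliver and did not.
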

 \begin{proof}

   Pick an arbitrary assignment $\tau$ to the universal variables
   $u_1,\dots,u_n$.  We say that a term $T$ {\em agrees} with an
   assignment $\tau$ iff there is no literal $l$ such that $\bar l\in
   T$ and $\tau(l)=1$.  Given a term-resolution proof $\pi $
   for~\eqref{equation:iff}, we show that $\pi$ must have a leaf that
   agrees with $\tau $ by constructing a path from the root to some
   leaf such that each node on that path agrees with $\tau$.  The root
   of $\pi$ agrees with $\tau$ because it does not contain any
   literals. If a term $T$ agrees with $\tau$, and $T$ is obtained from
   $T'$ by $\exists$-reduction, then $T'$ also agrees with $\tau$
   since $\tau $ assigns only to universal variables. If $ T $ agrees
   with $\tau $ and is obtained from $T_0$ and $T_1$ by
   term-resolution on some variable $y$, then $y\in T_k$ and $\bar y\in T_{1-k}$
   for some $k\in\{0,1\}$.
   Hence, at least one of the terms $T_0$~and~$T_1$ agrees with $\tau$.

   Recall that each leaf $T$ of $\pi$ must be obtained by the
   model-generation rule; i.e.\null, for each clause $C$
   of~\eqref{equation:iff} there is a literal~$l$ s.t.~$l\in C$ and
   $l\in T$. Hence, for each pair of clauses $(\bar u_i\lor e_i)\land
   (u_i\lor\bar e_i)$ either $\bar u_i, \bar e_i\in T$ or $u_i,e_i\in T$. 
   Consequently, each leaf of~$\pi$ has $n$ universal literals.

   For each of the $2^n$ possible assignments $\tau$, the proof $\pi$
   must contain a leaf~$T_\tau$ that agrees with~$\tau$. Since
   $T_\tau$ contains $n$ universal literals, for a different
   assignment~$\tau'$ there must be another leaf $T_{\tau'}$ that
   agrees with it. Overall, $\pi$ must must contain at least~$2^n$
   different terms.
 \end{proof}

 \begin{proposition}\label{proposition:bce_ve_polynomial}
    Both blocked clause elimination and variable elimination
    reduce the matrix of~\eqref{equation:iff} to the empty set of
    clauses in polynomial time.
 \end{proposition}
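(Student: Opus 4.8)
The plan is to treat the two techniques independently, in each case exhibiting an explicit schedule of eliminations that empties the matrix, and then to bound the work by counting.

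For blocked clause elimination I would exploit the paired structure of~\eqref{equation:iff}: the only clauses mentioning the variable $e_i$ are $(\bar u_i\lor e_i)$ and $(u_i\lor\bar e_i)$, and these interact with no other pair. First I would show that the existential literal $e_i$ is blocked in $C=(\bar u_i\lor e_i)$. The sole clause $D$ containing $\bar e_i$ is $(u_i\lor\bar e_i)$, so it suffices to produce a literal $k\in C$ with $k<e_i$ and $\bar k\in D$; the witness is $k=\bar u_i$, since $\vars(\bar u_i)=u_i$ precedes $e_i$ in the prefix and $u_i\in D$. Hence all $n$ clauses of the form $(\bar u_i\lor e_i)$ are blocked and may be removed. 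Once they are gone, no clause contains $e_i$, so in each surviving clause $(u_i\lor\bar e_i)$ the literal $\bar e_i$ is vacuously blocked, and these too are deleted. (Blockedness is monotone under clause deletion, since removing clauses only shrinks the set of $D$ that must be checked, so the order of removals is immaterial.) Each blocking test scans a formula of size $O(n)$ and there are $2n$ clauses, so the whole process runs in polynomial time.

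For variable elimination I would eliminate the existential variables $e_1,\dots,e_n$ in turn. When eliminating $e_i$ we have $\phi_{e_i}=\{(\bar u_i\lor e_i)\}$ and $\phi_{\bar e_i}=\{(u_i\lor\bar e_i)\}$. The side condition holds vacuously: the only clause in $\phi_{e_i}$ is $(\bar u_i\lor e_i)$, whose literals $\bar u_i$ and $e_i$ are both not greater than $e_i$ in the prefix order, so there is no clause $C\in\phi_{e_i}$ containing a literal $k$ with $e_i<k$. It then remains to compute the resolvents: the only candidate resolution, between $(\bar u_i\lor e_i)$ and $(u_i\lor\bar e_i)$ on $e_i$, is undefined, because after deleting $e_i,\bar e_i$ the remaining literals $\bar u_i$ and $u_i$ are complementary and resolution is disallowed in that case. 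Thus $\phi_{e_i}\cup\phi_{\bar e_i}$ is replaced by the empty set, and eliminating all $n$ variables empties the matrix; with $n$ elimination steps of constant local cost this is polynomial.

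The bulk of the argument is bookkeeping, so the only place demanding care is matching the two formal side-conditions exactly: for BCE the ordering constraint $k<e_i$ and the fact that the blocking witness $\bar u_i$ is universal yet admissible, and for VE both the vacuous discharge of the side condition and the observation that the sole resolvent would be the tautology $u_i\lor\bar u_i$ and is therefore not a legal resolvent. Once these checks are in place, the polynomial-time bounds follow immediately from counting.
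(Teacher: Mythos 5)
Your proof is correct and takes the same approach as the paper, which simply declares the claim ``immediate from definitions of blocked clause and variable elimination''; your argument is exactly that verification, spelled out in full (the blocking witness $\bar u_i$ for $e_i$ in $(\bar u_i\lor e_i)$, the vacuously satisfied VE side-condition, and the fact that the sole resolvent on $e_i$ would be the tautology $u_i\lor\bar u_i$ and is therefore not a legal resolvent, so the clause pairs simply vanish). Nothing further is needed.
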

 \begin{proof} Immediate from definitions of blocked clause and variable elimination.

 \end{proof}

\begin{corollary}\label {corollary:limitation}
  If blocked clause elimination or variable elimination are used for
  preprocessing, then reconstructing a term-resolution proof takes
  exponentially more time than preprocessing, in the worst case.
\end{corollary}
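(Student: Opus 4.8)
The plan is to combine the two preceding propositions directly. The corollary is essentially a corollary in the literal sense: it packages the tension between Proposition~\ref{proposition:term_exponential} (lower bound on proof size) and Proposition~\ref{proposition:bce_ve_polynomial} (upper bound on preprocessing time) into a single statement about the reconstruction step in the architecture of Figure~\ref{fig:architecture}.

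Let me sketch the steps. First, I would fix the parameterized family of true QCNFs from~\eqref{equation:iff} as the witnessing instances. By Proposition~\ref{proposition:bce_ve_polynomial}, applying either blocked clause elimination or variable elimination to the $n$-th instance reduces its matrix to the empty set of clauses in time polynomial in $n$; call this time bound $p(n)$. The empty matrix is trivially true, and a certificate for it (the "simple certificate" in the architecture) is of constant size. Second, I would invoke the definition of the postprocessing step: its job is to take the trace plus the simple certificate and emit a reconstructed term-resolution proof of the original formula~\eqref{equation:iff}. But any term-resolution proof of~\eqref{equation:iff} has size at least $2^n$ by Proposition~\ref{proposition:term_exponential}. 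Since merely writing down an object of size $2^n$ takes time at least $2^n$, the reconstruction step must take time exponential in $n$, whereas preprocessing took only $p(n)$ time. Hence reconstruction is exponentially slower than preprocessing in the worst case, which is exactly the claim.

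The only genuine subtlety---and the step I would be most careful about---is the transition from "the proof is large" to "reconstruction is slow." The lower bound in Proposition~\ref{proposition:term_exponential} bounds the \emph{size} of any term-resolution proof, so I must make explicit the (elementary but essential) assumption that the postprocessor actually produces the reconstructed proof as explicit output, so that its running time is bounded below by the output size. This is the standard output-size-as-time-lower-bound argument and is justified here because a certificate is only useful if it is materialized and can be independently checked, as the architecture in Figure~\ref{fig:architecture} makes clear. Everything else is immediate from the two propositions, so no further calculation is needed.
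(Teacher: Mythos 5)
Your proposal is correct and follows exactly the paper's (implicit) argument: the corollary is stated without a separate proof precisely because it is the direct combination of Proposition~\ref{proposition:term_exponential} (any term-resolution proof of~\eqref{equation:iff} is exponential in size) with Proposition~\ref{proposition:bce_ve_polynomial} (BCE/VE empty the matrix in polynomial time), plus the observation that emitting an exponential-size proof takes exponential time. Your explicit attention to the output-size-as-time-lower-bound step is a sound way to make rigorous what the paper leaves tacit.
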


In the remainder of the paper we do not consider
term-resolution+model-generation proofs for certification since
\autoref{corollary:limitation} shows that, in the context of preprocessing, this
calculus is not appropriate.
Rather than term-resolution, we will use models to certify true
formulas. We should note, however, that for such we are paying a price
of higher complexity for certificate verification. While
term-resolution+model-generation proofs can be verified in polynomial
time, verification of models is coNP-complete.  (For false formulas,
QU-resolution is used for certification, which is still verifiable in
polynomial time.)

In a similar spirit, we do not consider the preprocessing technique of
universal-expansion~\cite{BubeckBuningSAT07}, which is based on the identity
$\forall x.\,\Phi=\Phi[1/x]\land\Phi[0/x]$. While there is no hard
evidence that there is no tractable algorithm for reconstructing
QU-resolution proofs for universal-expansion, recent work hints in this
direction~\cite{DBLP:conf/sat/JanotaM13}. Hence, only the techniques
described in \autoref{section:techniques} are considered.

\section{Certificate Reconstruction}
\label{section:reconstruction}
  This section shows how to produce certificates in the context of
  preprocessing. In particular, we focus on two types of certificates:
  QU-resolution refutations (\autoref{sec:Qresolution}) for false
  formulas and models (\autoref{def:model}) for true formulas.  We
  consider each of the techniques presented in
  \autoref{section:techniques} and we show how a certificate is {\em
    reconstructed\/} from the certificate of the preprocessed
  formula. This means that reconstruction produces a model
  (resp.\ refutation) for a formula $\Phi$ from a model
  (resp.\ refutation) for a formula $\Phi'$, which resulted from $\Phi$
   by the considered technique.  For nontrivial
  reconstructions we also provide a proof of why the reconstruction is
  correct.

  Having a reconstruction for each of the preprocessing techniques
  individually enables us to reconstruct a certificate for the whole
  preprocessing process. The preprocessing process produces a
  sequence of formulas $\Phi_0,\dots,\Phi_n$ where $\Phi_0$ is the
  input formula, $\Phi_n$ is the final result,
  and each formula $\Phi_{i+1}$ is obtained from $\Phi_i$ by \emph{one}
  preprocessing technique.  For the purpose of the reconstruction, we
  are given a certificate $\Cert_n$ for the formula $\Phi_n$. This
  final certificate $\Cert_n$ is in practice obtained by a QBF
  solver.  The reconstruction for the whole processing process works
  backwards through the sequence of formulas $\Phi_0,\dots,\Phi_n$.
  Using $\Cert_n$, it reconstructs a certificate $\Cert_{n-1}$ for the
  formula $\Phi_{n-1}$, then for $\Phi_{n-2}$ and so on until it
  produces a certificate $\Cert_0$ for the input formula. The reminder
  of the section describes these individual reconstructions for the
  considered techniques.

  We begin by two simple observations. If a transformation removes a
  clause, then reconstruction of a QU-resolution proof does not need to do
  anything. Analogously, reconstruction of models is trivial for
  transformations adding new~clauses.
  \begin{observation}
  \label{observation:clause_removal}
  Consider a QCNF $\Phi=\Pref.\,\phi $ and a clause $C\in\phi$.  Any
  QU-resolution proof of $\Phi'=\Pref.\,\phi\smallsetminus\{C\}$ is
  also a QU-resolution proof of $\Phi$.
  \end{observation}

  \begin{observation}
  \label{observation:clause_addition}
  Consider a QCNF $\Phi=\Pref.\,\phi $ and a clause $C$ over the
  variables of $\Phi$.  Any model of $\Phi'=\Pref.\,\phi\union\{C\}$
  is a model of $\Phi$.
  \end{observation}

\subsection{Subsumption, Self-Subsumption, and Unit Propagation}
\label{section:basic}

In the case of subsumption, a QCNF $\Phi = \Pref.\,\phi$ is
transformed into $\Phi' = \Pref.\,\phi\smallsetminus\{C\}$ for a clause
$C$ for which that there is another clause $D\in\phi$ such that
$D\subseteq C$. For reconstructing QU-resolution nothing needs to be
done due to \autoref{observation:clause_removal}. For any model $M'$
of $\Phi'$, the formula $M'(\phi\smallsetminus\{C\})$ is a tautology and in
particular $M'(D)$ is a tautology and therefore necessarily $M'(C)$ is
a tautology because $C$~is weaker than~$D$. Hence, $M'(\phi)$ is a tautology
and $M'$ is also a model of $\Phi$.

In order to reconstruct unit propagation and self-subsumption we
first show how to reconstruct resolution steps. For such, consider the
transformation of a QCNF $\Phi = \Pref.\,\phi$ into the formula
$\Phi'=\Pref.\,\phi\union\{C\}$ where $C$ is a resolvent of some
clauses $D_1,D_2\in\phi$. Any QU-resolution proof $\pi'$ of $\Phi'$ where $C$
appears as a leaf of $\pi'$ is transformed into a QU-resolution proof of $\Phi$
by prepending this leaf with the resolution step of $D_1$ and $D_2$.
Any $M'$ model of $\Phi' $ is also a model of $\Phi$ due to
\autoref{observation:clause_addition}.

Each self-subsumption strengthening consists of two steps: resolution and
subsumption. Unit propagation consists of resolution steps,
subsumption, and the pure literal rule (see \autoref{section:pureliterals}).
Hence, certificates are reconstructed accordingly. Note that
in self-subsumption strengthening, resolution steps may be carried out
on universal literals while in unit propagation this would not be
meaningful because the moment the matrix contains a unit clause where
the literal is universal, the whole formula is trivially false due to
universal reduction.
%

\subsection{Variable Elimination (VE)}
To eliminate a variable $ x $ from $\Pref.\,\phi$, VE partitions the
matrix $\phi$ into the sets of clauses $\phi_x$, $\phi_{\bar x}$, and
$\xi$ as described in \autoref{section:techniques}.
Subsequently, $\phi_x$ and $\phi_{\bar x}$ are replaced by the set $\phi_x\otimes\phi_{\bar
  x}$, which is defined as the set of all possible resolvents on~$x$
of clauses that do not contain another complementary literal.  Recall
that VE can be only carried out if the side-condition specified
in \autoref{section:techniques} is fulfilled.

To reconstruct a QU-resolution proof we observe that VE can be split
into operations already  covered.  The newly added clauses are
results of resolution on existing clauses, which was already covered
in \autoref{section:basic}. Clauses containing $x$ are removed, which
does not incur any reconstruction due to
\autoref{observation:clause_removal}.

To reconstruct models we observe that any given formula $\Phi$ 
can be written as
$\Phi=\Pref_1\exists x\Pref_2.\,(x\lor\phi_1)\land(\bar x\lor\phi_2)\land\xi$ 
for CNF formulas $\phi_1$, $\phi_2$, and $\xi$ that do not contain~$x$.
Then, VE consists in
transforming $\Phi$ into the formula
$\Phi'=\Pref_1\Pref_2.\,(\phi_1\lor\phi_2)\land\xi$
(note that $\phi_1\lor\phi_2$ corresponds to $(x\lor\phi_1)\otimes(\bar
x\lor\phi_2)$).
VE's side-condition
specifies that any clause $C\in\phi_1$ that contains some literal~$k$ such that
$k>x$ and any clause $D\in\phi_2$, there is a literal $z<x$ such that
$z\in C$ and $\bar z\in D$.

In order to construct a model for the original formula $\Phi$ from a
model~$M'$ of~$\Phi'$, we aim to add to $M'$ a definition for $x$
which sets~$x$ to~$1$ when $\phi_1$ becomes~$0$ and it sets it to~$1$
when $\phi_2$ becomes~$0$. Since $M'$ is a model of $\Phi'$, the
strategy $M'$ satisfies one of the $\phi_1$,~$\phi_2$ for any game.
The difficulty lies in the fact that $\phi_1$ and $\phi_2$ may contain
variables that are on the right from $x$ in the quantifier prefix (those in
$\Pref_2$) and these must not appear in the definition of~$x$. Hence,
we cannot use $\phi_1$ and $\phi_2$ to define~$x$ as they are.  Instead,
we construct a formula~$\phi_2'$ by removing from $\phi_2$ all
unsuitable literals, i.e.\ literals $k$ for which $x<k$.  Then, we set
the definition for $x$ to $M'(\phi'_2)$.  Now whenever $\phi'_2$
evaluates to 1,  so do $\phi_2$ and $(x\lor\phi_1)\land(\bar
x\lor\phi_2)$, because $x$ is set to~$1$.  If, however,
$\phi'_2$ evaluates to~$0$, then $\phi_2$ might not necessarily evaluate to~$0$,
but $x$ is set to~$0$ by our strategy regardless.  Due to the
side-condition, in such cases $\phi_1$ must evaluate to~$1$ and
therefore our strategy is safe. This is formalized by the following
proposition.

%

\begin{proposition}
Let $\Phi=\Pref_1\exists x\Pref_2.\,(x\lor\phi_1)\land(\bar x\lor\phi_2)\land\xi$
with $\phi_1$~and~$\phi_2$ not containing~$x$;
let $\Phi' =\Pref_1\Pref_2.\,(\phi_1\lor\phi_2)\land\xi$, as above.
Define $\phi_2'$ to be $\phi_2$ with all the literals not less than $x$ deleted;
i.e.,
$\phi_2'=\comprehension{\comprehension{l}{l\in C,l<x}}{C\in\phi_2}$.
If $M'$~is a model for~$\Phi'$, then $M=M'\cup\{\psi_x\}$ is a model for~$\Phi$,
where $\psi_x=M'(\phi_2')$.
\end{proposition}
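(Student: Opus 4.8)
The plan is to verify \autoref{def:model} for $M$ directly, i.e.\ to show that $M$ is a legal strategy for $\Phi$ and that $M(\mu,\tau)=1$ for the matrix $\mu=(x\lor\phi_1)\land(\bar x\lor\phi_2)\land\xi$ and every complete assignment $\tau$ to the universal variables. First I would check that $\psi_x$ is an admissible definition, namely that it mentions only universal variables preceding $x$. By construction $\phi_2'$ retains only literals $l<x$; since $M'$ replaces each existential literal by a formula over the universals to its left, $\psi_x=M'(\phi_2')$ is over universals preceding $x$, as a definition of $x$ must be. The definitions that $M$ inherits from $M'$ for the existentials of $\Pref_1\Pref_2$ remain admissible because deleting the existential $x$ does not change which universals lie to the left of any other variable. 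The recurring observation throughout will be that $M$ and $M'$ agree on every literal except $x$, so $M(\eta,\tau)=M'(\eta,\tau)$ for every $\eta$ not containing $x$ and every $\tau$; this applies to $\xi$, $\phi_1$, and $\phi_2$.

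Fixing $\tau$, I would use that $M'$ is a model of $\Phi'$ to get $M'(\xi,\tau)=1$ and $M'(\phi_1\lor\phi_2,\tau)=1$. The conjunct $\xi$ is then immediate, since $M(\xi,\tau)=M'(\xi,\tau)=1$. For the two clauses containing $x$, set $a:=M(x,\tau)=M'(\phi_2',\tau)$ and split on its value. When $a=1$ the clause $x\lor\phi_1$ is satisfied by $x$, and for $\bar x\lor\phi_2$ I would invoke a short monotonicity argument: every clause of $\phi_2'$ is a subclause of a clause of $\phi_2$, so whatever literal satisfies a clause of $\phi_2'$ (some $l<x$, where $M$ and $M'$ agree) also satisfies the clause of $\phi_2$ it was cut from; hence $M(\phi_2,\tau)=1$ and $\bar x\lor\phi_2$ holds.

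The case $a=0$ is the crux, and where the side-condition is used. Here $x$ is set to $0$, so $\bar x\lor\phi_2$ holds via $\bar x$, and it remains to prove $M'(\phi_1,\tau)=1$. From $a=0$ I would fix a clause $D\in\phi_2$ all of whose literals $l<x$ are false under $(M',\tau)$. For a clause $C\in\phi_1$ that carries a literal $k>x$, the side-condition applied to $C$ and $D$ supplies a literal $z<x$ with $z\in C$ and $\bar z\in D$; as $\bar z$ is one of the false $l<x$ literals of $D$, the literal $z$ is true, so such a $C$ is satisfied. The step I expect to cost the most work is the complementary family of clauses $C\in\phi_1$ all of whose literals precede $x$: the side-condition is silent about them, so for these I would have to lean on the model property of $M'$ applied to the resolvents $C\lor D'$ (for $D'\in\phi_2$) together with the choice of $D$, and argue carefully that each such $C$ is still forced true; pinning down this sub-case, and confirming that no clause of $\phi_1$ escapes both arguments, is the main obstacle I anticipate.
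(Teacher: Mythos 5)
Your skeleton matches the paper's up to the point you flag: the well-definedness of $\psi_x$, the agreement of $M$ and $M'$ on $x$-free formulas, the $\xi$ conjunct, the split on $a=M(x,\tau)=M'(\phi_2',\tau)$, and the monotonicity argument for $a=1$ are all exactly the paper's steps, and your treatment of clauses $C\in\phi_1$ that contain a literal $k>x$ is also correct. But the case you mark as the main obstacle is a genuine gap, and the remedy you sketch cannot close it. The paper resolves it by splitting on the \emph{other} side of the resolution, namely on the clause $D\in\phi_2$ whose shallow projection $D'=\comprehension{l}{l\in D,\,l<x}$ is falsified under $(M',\tau)$. If $D=D'$, then $M'(\phi_2,\tau)=0$ outright, and since the matrix of $\Phi'$ is (up to dropped tautologies) the conjunction of all resolvents, $M'(\phi_1\lor\phi_2,\tau)=1$ forces $M'(\phi_1,\tau)=1$; this satisfies \emph{every} clause of $\phi_1$ at once, including the all-shallow ones you are stuck on. If $D\neq D'$, i.e.\ $D$ carries a literal $k>x$, the paper invokes the side-condition with the roles of the two sides exchanged: the deep clause is now in $\phi_2$, and the condition supplies, for every clause $E\in\phi_1$, a literal $l_E\in D$ with $l_E<x$ and $\bar l_E\in E$; since all shallow literals of $D$ are false, each $\bar l_E$ is true, so again all of $\phi_1$ is satisfied. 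This is the step where the side-condition must be read symmetrically in the two polarities of $x$ (the paper states it one-sidedly but uses it in this exchanged direction).

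Your proposed substitute---applying the model property of $M'$ to the resolvents $C\lor E$---fails precisely in that second sub-case: when $D$ has a deep literal, the resolvent $C\lor D$ (if it is non-tautological at all) may be satisfied under $(M',\tau)$ by that deep literal of $D$, which tells you nothing about $C$. In fact, with the side-condition read only in the direction you use (deep clause required to be in $\phi_1$), the statement you are trying to prove is false. Take the prefix $\forall u_1\forall u_2\exists x\exists y$ with $\phi_1=\{(u_1)\}$, $\phi_2=\{(\bar u_2\lor y)\}$, $\xi=\emptyset$. Your reading of the side-condition holds vacuously; $\Phi'$ has the single clause $(u_1\lor\bar u_2\lor y)$, and $M'=\{\psi_y=1\}$ is a model of $\Phi'$; but $\psi_x=M'(\phi_2')=\bar u_2$ yields $M(x\lor u_1)=\bar u_2\lor u_1$, which is not a tautology, so $M$ is not a model of $\Phi$. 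Hence no amount of work with the hypotheses you allow yourself can finish the proof; the exchanged form of the side-condition is indispensable, and once you have it, the paper's two sub-cases above complete the argument.
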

\begin{proof}
  The functions of $M$ form a well-defined strategy since $M'$ is
  a well-defined strategy and~$\psi_x$ does not contain any
  literals~$k$ with $k>x$.
  To show that $M$ is a model of $\Phi$, consider any complete
  assignment $\tau$ to the universal variables of $\Phi$.  Now we
  wish to show that the matrix of $\Phi$ evaluates to~$1$ under $M$
  and $\tau$.
  Since $M'$ is a model of $\Phi'$, and $\xi$ does not contain $x$, it
  holds that $M(\xi,\tau)=M'(\xi,\tau)=1$. So it is left to be shown
  that the subformula $(x\lor\phi_1)\land (\bar x\lor\phi_2)$ is true
  under $M$ and $\tau$.

  Because $\phi_1$, $\phi_2$ do not contain $x$ we have
  $M(\phi_1) = M'(\phi_1)$,
  $M(\phi_2) = M'(\phi_2)$, 
  $M(\phi_2') = M'(\phi_2')$, 
  and
  $M(\phi_1\lor\phi_2,\tau)=M'(\phi_1\lor\phi_2,\tau)=1$.
  Split on the following cases (distinguishing between the values of~$x$ 
   under $\tau$ and $M$).

          If $M(x,\tau)=M'(\phi_2',\tau)=1$.
          Because $\phi_2'$ is stronger than $\phi_2$, i.e.\ $M(\phi_2')\limp M(\phi_2)$,
          also $M(\phi_2,\tau)=1$. Hence $M((x\lor\phi_1) \land (\bar x\lor \phi_2),\tau) = 1$.

         If $M(x,\tau)=M'(\phi_2',\tau)=0$. 
         There must be a clause $C'\in\phi_2'$ s.t.\ $M'(C',\tau)=0$, 
         i.e.\ for all literals $l\in C'$, $M'(l,\tau)=0$.
         Let $C\in\phi_2$ be a clause from which $C'$ 
         resulted by removing some literals (possibly none), 
         i.e.\ $C'=\comprehension{l}{l\in C, l<x}$.
         Now consider two sub-cases depending on whether $C=C'$ or $C\neq C'$.
         If $C=C'$, $M'(C,\tau)=0$ and $M'(\phi_2,\tau)=0$, from which
         $M'(\phi_1,\tau)=1$ because $M'(\phi_1\lor\phi_2,\tau)=1$.
         Hence $M((x\lor\phi_1)\land(\bar x\lor\phi_2)) = 1$.
         If $C\neq C'$, due to the side-condition, $C$
         contains for each clause $D\in\phi_1$ a literal~$l_D$
         s.t.\ $\bar l_D \in D$ and $l_D<x$.
          Since each literal~$l_D$ is less than $x$, it is also in $C'$.
          Since $M(C',\tau)=0$, each $M(l_D,\tau)=0$ and $M(\bar l_D,\tau)=1$.
          From which $M(\phi_1,\tau)=1$ and $M((x\lor\phi_1) \land (\bar x\lor\phi_2),\tau) = 1$.
\qed
\end{proof}

\subsection {Pure Literal Rule (PLR)}\label{section:pureliterals}
PLR for existential literals is a special case of
both variable elimination and blocked clause elimination. (An
existential pure literal is a blocked literal in any clause.)
Hence, certificate reconstruction for existential PLR is done accordingly.

For a universal literal~$l$ with $\vars(l)=y$, a QCNF $\Phi=\Pref_1\forall
y\Pref_2.\,\phi$ is translated into the QCNF formula
$\Phi'=\Pref_1\Pref_2.\,\phi'$ by removing $l$ from all clauses where
it appears. To obtain a QU-resolution proof $\pi$ for $\Phi$ from a
QU-resolution proof $\pi'$ one inserts $l$ in any of the leafs
$C'\in\phi'$ of $\pi'$ s.t.\ there exists $C\in\phi$ with
$C'=C\smallsetminus\{l\}$.  Then, $\forall$-reductions of~$l$ are
added to~$\pi'$ whenever possible.  Note that the addition of~$l$ cannot lead to
tautologous resolvents since only~$l$ is inserted and never $\bar l$. The
newly added universal literals must be necessarily $\forall$-reduced
as~$\pi'$ eventually resolves away all existential literals.
Since~$l$ is universal, any model of $\Phi'$ is also a model of~$\Phi$.

\subsection{Blocked Clause Elimination (BCE)}

For a QCNF $\Phi=\Pref.\,\phi$, BCE identifies a blocked clause
$C\in\phi$ and a blocked existential literal $l\in C$, and removes $C$
from~$\phi$. Recall that for a blocked literal it holds that for any
$D\in\phi$ such that $\bar l\in D$ there exists a literal $k\in C$ such that $\bar
k\in D$ and~$k<l$.

To reconstruct QU-resolution proofs, nothing needs to be done due to
\autoref{observation:clause_removal}.
To show how to reconstruct models, let $M'$ be a model
for $\Phi'=\Pref.\,\phi\smallsetminus\{C\}$.  Let $W$ be the set of
literals that serve as witnesses for $l$ being blocked, i.e.\
$W = \comprehension{ k\in C}{ k\neq l\text{ and there exists a } D\in\phi\text{ s.t.\ } \bar k,\bar l\in D \text{ and } k<l}$.

The intuition for constructing a model for $\Pref.\,\phi$ is to play the same as $M'$ except
for the case when the literals $W$ are all $0$, then make sure that $l$ evaluates to $1$.
This is formalized by the following proposition.
\begin{proposition}
 Let $\Phi$, $\Phi'$, $M'$, and $W$ be defined as above.
   Let $x=\vars(l)$ and $\psi'_x\in M'$ be the definition for~$x$.
  Define $\psi_x = \psi'_x\lor M'(\bigwedge_{k\in W}\bar k)$ if $l=x$ and
  $\psi_x = \psi'_x\land M'(\bigvee_{k\in W} k)$ if $l=\bar x$. Finally, define
  $M = M'\smallsetminus\{\psi'_x\}\union\{\psi_x\}$. Then $M$ is a model of~$\Phi$.
  (Note that universal literals of $W$ are untouched by~$M'$.)
\end{proposition}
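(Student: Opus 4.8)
The plan is to verify the two defining requirements of a model in turn: that $M$ is a well-defined strategy, and that $M(\phi,\tau)=1$ for every complete assignment $\tau$ to the universal variables. By symmetry I would treat only the case $l=x$; the case $l=\bar x$ is dual, obtained by swapping the roles of $x$ and $\bar x$ and of $\land$ and $\lor$ in the definition of $\psi_x$. For well-definedness, the only definition that changed relative to $M'$ is that of $x$, so it suffices to check that $\psi_x=\psi'_x\lor M'(\bigwedge_{k\in W}\bar k)$ mentions no universal variable failing to precede $x$. This holds because every $k\in W$ satisfies $k<l$, so $\vars(k)$ precedes $x$: for universal $k$ the literal $\bar k$ is already over such a variable, and for existential $k$ the substituted definition $M'(\bar k)$ is over universals preceding $\vars(k)$, hence preceding $x$. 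As $\psi'_x$ is likewise over universals preceding $x$, $M$ is a strategy.

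For the model condition, fix $\tau$. The key observation is that $M$ and $M'$ assign the same value to every literal whose variable differs from $x$, since only the definition of $x$ was altered, and that $M(x,\tau)\ge M'(x,\tau)$, because $\psi_x$ weakens $\psi'_x$ by a disjunction. I would then split $\phi$ into cases. Clauses containing neither $x$ nor $\bar x$ evaluate identically under $M$ and $M'$ and so stay satisfied, as $M'$ is a model of $\Phi'$. A clause of $\phi\smallsetminus\{C\}$ containing the literal $x$ also stays satisfied: if $M'$ satisfied it through $x$, then $M$ does so a fortiori, and otherwise its satisfying literal is untouched.

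The crux, which I expect to be the main obstacle, is the treatment of clauses $D$ containing $\bar x=\bar l$, together with the blocked clause $C$ itself. For such $D$, either $M(x,\tau)=0$, in which case $\bar x$ still satisfies $D$; or $M(x,\tau)=1$ while $M'(x,\tau)=0$, which forces $M'(\bigwedge_{k\in W}\bar k,\tau)=1$, i.e.\ every $k\in W$ is false under $\tau$. Here the blocking hypothesis is exactly what is needed: it supplies a witness $k\in C$ with $\bar k\in D$ and $k<l$, and since $\bar l\in D$ this $k$ lies in $W$; as $k<x$ its value is preserved, $M(k,\tau)=M'(k,\tau)=0$, so $\bar k\in D$ is true and $D$ is satisfied. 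For $C$ itself, if $M(x,\tau)=1$ the literal $l=x$ satisfies it, and if $M(x,\tau)=0$ then, by the definition of $\psi_x$, some $k\in W\subseteq C$ is true under $\tau$ and satisfies $C$. Combining these cases yields $M(\phi,\tau)=1$ for all $\tau$, so $M$ is a model of $\Phi$. The only place where I must be careful is confirming that the blocking condition, stated for all clauses of $\phi$ meeting $\bar l$, indeed hands a witness lying in $W$ for precisely those clauses that flipping $x$ could otherwise break, and that such witnesses keep their value because they are ordered before $x$.
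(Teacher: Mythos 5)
Your argument rests on exactly the same three facts as the paper's proof: every literal $k\in W$ satisfies $k<l$, so its definition and its value are untouched by changing the definition of $x$; the new definition $\psi_x$ forces $l$ to be true precisely when all literals of $W$ are false; and the blocking condition hands every clause $D$ with $\bar l\in D$ a witness $\bar k\in D$ with $k\in W$. The only difference is organizational: the paper first splits on the assignment $\tau$ (whether all literals of $W$ evaluate to $0$ under $M$, the other case being dismissed because then $M$ coincides with $M'$ and $C$ is satisfied by the true $W$-literal), and only then on clause types, whereas you split on clause types first and then on the value of $x$; the paper also treats both polarities of $l$ uniformly, while you handle $l=x$ and appeal to duality for $l=\bar x$, which is sound.

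One spot needs a one-line repair. For clauses $D$ containing $\bar x$, your case split --- either $M(x,\tau)=0$, or $M(x,\tau)=1$ while $M'(x,\tau)=0$ --- is not exhaustive: it omits the case $M(x,\tau)=M'(x,\tau)=1$. That case is immediate from observations you already made: since $M'$ is a model of $\Phi'$ and $\bar x$ is false under $M'$ and $\tau$, the clause $D$ must be satisfied under $M'$ by a literal whose variable differs from $x$, and that literal keeps its value under $M$; hence $M(D,\tau)=1$. (Equivalently: when $x$ does not flip, every literal of $D$ evaluates the same under $M$ and $M'$.) With that sentence added, your proof is complete and matches the paper's in substance.
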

\begin{proof}
Strategy $M$ is well-defined because literals in $W$ are all less
 than $l$ and therefore definitions for those literals
also contains literals less than~$l$.
%
%
Let us consider some total assignment $\tau$ to the universal
variables of $\Phi$ under which all literals in $W$ are $0$ under $M$
(for other assignments $M$ behaves as $M'$ and $C$ is true).  Now let
us split the clauses of $\phi$ into 3 groups.
Clauses that do not contain $\bar l$ nor $l$;
clauses that contain $l$;
and those that contain $\bar l$.
For any clause $D\in\phi$ not containing $l$ nor $\bar l$,
$M(D,\tau)=1$ since $M(D,\tau)=M'(D,\tau)$ and $M'$ is a model of
$\Phi'$.
For any clause $D\in\phi$ containing $l$, $M(D,\tau)=1$
since $M(l,\tau)=1$; this includes the clause $C$.
Due to the sidecondition, any clause $D\in\phi$ that contains $\bar l$
also contains a literal $k$ s.t.\ $\bar k\in W$. Since for $M(\bar
k,\tau)=0$, i.e.\ $M(k,\tau)=1$, it holds that $M(D,\tau)=1$.
%
\qed
\end{proof}




\subsection {Equivalent Literal Substitution (ELS)}

For a formula $\Phi = \Pref.\,\phi$, ELS constructs strongly connected
components of the binary implication graph $G$ of $\phi$.  Once a
strongly connected component $ S $ of the graph is constructed, ELS
checks whether $S$ yields falsity.  If it does, ELS produces a
QU-resolution proof for such. The following discusses scenarios of
falsity that may arise. First recall that if there is a path in $G$ from a
literal $l_1$ to $l_k$ then there is a set of clauses $(\bar l_1\lor l_2$),
$(\bar l_2\lor l_3)$, $\dots$, $(\bar l_{k-1}\lor l_k$), which
through a series of QU-resolution steps enables us to derive the clause
$\bar l_1\lor l_k$. Also recall that whenever there is a path from
$l_1$ to $l_k$ in some component $S_1$, there is also a path from
$\bar l_1$ to $\bar l_k$ in the component $S_2$, obtained from
$S_1$ by negating all literals and reversing all edges.
These observations are repeatedly used in the following~text.

(1)~If $S$ contains two universal literals $l_1$ and $l_2$, derive the
clause $\bar l_1\lor l_2$, which is then $\forall$-reduced to the
empty clause. (Note that this also covers $l_2=\bar l_1$.)

(2)~If~$S$ contains an existential literal~$l_e$ and an universal
literal~$l_u$ such that $l_e<l_u$, derive the clause
$\bar l_e\lor l_u$ from which $\forall$-reduction gives~$\bar l_e$.
Derive $l_e$ analogously. Finally resolve $\bar l_e$ and
$l_e$ to obtain the empty clause.

(3)~If $S$ contains two literals $e$ and $\bar e$ for some existential
variable $e$, derive the unit clauses $e$ and $\bar e$ and resolve them
into the empty clause.

If none of the three conditions above are satisfied, all literals in
$S$ are substituted by a representative literal $r$, which is the
smallest literal from $S$ w.r.t.\ the literal ordering~$<$. This
yields a formula $\Phi'=\Pref'.\,\phi'$, where $\Pref'$ resulted from
$\Pref $ by removing all variables that appear in $S$ except for
$\vars(r)$.  A certificate is reconstructed as follows.

If a QU-resolution proof $\pi'$ for $\Phi'$ relies on a clause $C'\in\phi'$ that
resulted from some cause $C\in\phi $ by replacing a $l\in S$ by $r$,
construct the clause $\bar l\lor r$ and resolve it with~$C$ to
obtain~$C'$. Analogously, if $C'$ resulted from $C$ by
replacing $\bar l\in S$ with $\bar r$, construct the clause $l\lor\bar r$ 
and resolve it with $C$ to obtain $C'$.

If $M'$ is a model of $\Phi'$ and $r$ is existential, then $S$
does not contain any universal literals and $M'$ defines the value for
$r$ by some formula $\psi_r = M'(r)$. In such case $\psi_r$ is over
universal variables that are less than all the
literals in $S$ because $r$ was chosen to be the outermost literal. If
$x\in S$ for some existential variable $x$, set $\psi_x$ as $\psi_r$;
if $\bar x\in S$ for some existential variable $x$, set $\psi_x$ as
$\lnot\psi_r$.
If $r$ is universal, all the other literals in $S$ are existential and so
for $x\in S\smallsetminus \{r\}$ we set $\psi_x=r$;
for $\bar x\in S\smallsetminus\{r\}$,
we set $\psi_x=\bar r$.

\section{Related Work}
Local simplifications based on identities such as $0x=0$ appear in
number of instances of automated reasoning (c.f.~\cite{Harrison09}).
In SAT solving, it was early recognized that going beyond such local
simplifications leads to significant performance gains.  A notable
technique is variable elimination (VE), which originates in the
Davis\&Putnam procedure (DP). While DP is itself complete, it suffers
from unwieldy memory consumption.
It has been shown that applying VE {\em only\/} if it does not lead to
increase of the formula's size, gives an incomplete yet powerful
technique~\cite{DBLP:conf/sat/SubbarayanP04}.
The preprocessor~\satelite~\cite{DBLP:conf/sat/EenB05} boosts VE by
subsumption, self-subsumption, and unit propagation.

Nowadays, preprocessors (and SAT solvers themselves) contain a number
of preprocessing techniques such as {\em blocked clauses
  elimination\/}~\cite{DBLP:journals/tcs/Kullmann99,DBLP:conf/cp/OstrowskiGMS02,DBLP:journals/jar/JarvisaloBH12},
{\em hyper binary resolution\/}~\cite{DBLP:conf/sat/BacchusW03} and
others (cf.~\cite{DBLP:conf/lpar/HeuleJB10}).
%
Reconstructing solutions in SAT is generally easier than in QBF, but it
has also been investigated~\cite{DBLP:conf/sat/JarvisaloB10}.

Many SAT preprocessing techniques were generalized for
QBF~\cite{DBLP:conf/sat/Biere04,DBLP:conf/cp/SamulowitzDB06,DBLP:conf/sat/GiunchigliaMN10,DBLP:phd/de/Bubeck2010,DBLP:conf/cade/BiereLS11};
application thereof is crucial for QBF solving~\cite{qbfgallery2013}.  QBF leads to a number of specifics in
the techniques.  VE can be only performed under a
certain side-condition (\autoref {section:techniques}); Van
Gelder~\cite{DBLP:conf/cp/Gelder12} further generalizes this
side-condition.  A technique specific to QBF is {\em universal-variable
  expansion\/}~\cite{BubeckBuningSAT07,DBLP:phd/de/Bubeck2010}
where a universal quantifier $\forall x.\,\Phi$ is expanded into
$\Phi[0/x]\land\Phi[1/x]$ and then brought into the prenex form by
variable renaming. (Expansion can be used to obtain a complete
solver~\cite{DBLP:conf/lpar/Benedetti04,DBLP:conf/sat/Biere04,DBLP:conf/sat/LonsingB08,DBLP:conf/sat/JanotaKMC12}.)
%
In his recent work, Van Gelder provides some initial insights into
reconstruction of variable elimination and expansion~\cite{AVGQBFW13}.
There, however, he only shows how to reconstruct an individual leaf
of a term-resolution proof, but does not show how to construct the proofs
themselves.

A number of works focus on the certification of QBF solvers
(e.g.~\cite%
{DBLP:conf/cade/Benedetti05%
,DBLP:conf/sat/JussilaBSKW07%
,DBLP:journals/aicom/NarizzanoPPT09%
,DBLP:conf/ijcai/GoultiaevaGB11%
})
motivated by error prevention~\cite{DBLP:conf/sat/BrummayerLB10}, but
also because the certificates themselves can be useful
(e.g.~\cite%
{ng_thesis%
,DBLP:conf/sat/StaberB07%
,DBLP:conf/fmcad/SamantaDE08%
,DBLP:journals/fmsd/BalabanovJ12%
,DBLP:conf/sat/JordanK13}).

\section {Experimental Evaluation}
\label{section:experiments}

We test five scenarios, corresponding to different settings for preprocessing ({\bf f}ull, {\bf s}imple, or {\bf n}one) and for solving (with a {\bf q}dag dependency manager, or {\bf s}imple).
\autoref{table:solved} defines and names the scenarios that we tested\,---\,%
the last letter indicates whether certificate generation was enabled ({\bf y}es or {\bf n}o).
The scenario \emph{nsy} represents the state-of-the-art in QBF solving \emph{with} certificate generation, and is the scenario we set out to improve.
The scenario \emph{ssy} represents our contribution to QBF solving with certificate generation.
We use the QBC format for certificates~\cite{DBLP:conf/sat/JussilaBSKW07}: the size of models is the number of $\land$-gates used, the size of refutations is the number of resolution steps used.
(See online\footnote{\url{http://sat.inesc-id.pt/~mikolas/lpar13-prepro/}} for the exact testing environment being used.)

\paragraph{Results and Discussion.}

\autoref{fig:cactus} shows the overall performance of five scenarios on the QBFEVAL~2012 benchmark.
There is a clear gap between scenarios that use preprocessing (fqn, ssn, ssy) and scenarios that do not use preprocessing (nsn, nsy)\,---\,preprocessing is clearly beneficial.
The gap nsy--nsn shows that enabling tracing in {\sf depqbf} deteriorates its performance.
The gap ssy--ssn is smaller than the gap nsy--nsn, indicating that enabling tracing in {\sf bloqqer}+{\sf depqbf} deteriorates performance \emph{less} than it does for {\sf depqbf} alone.
The gap fqn--ssn should be reduced by future work.
The most important observation to make on \autoref{fig:cactus} is that our proposed scenario (ssy) significantly improves the state-of-the-art in QBF solving with certificate generation~(nsy).
\autoref{table:solved} gives the total number of solved instances for each scenario, thus it corresponds to the rightmost points in \autoref{fig:cactus}.
The generated certificates (in scenarios nsy, ssy) were not all checked:
Those instances on which the certificate checker timed out are listed in the unchecked column.
(Recall that checking strategies is coNP-complete.)
The $7$~unchecked certificates in the nsy scenario are largely disjoint from the $8$~unchecked certificates in the ssy scenario\,---\,the overlap is exactly one instance.

\begin{figure}[t]\centering
\includegraphics{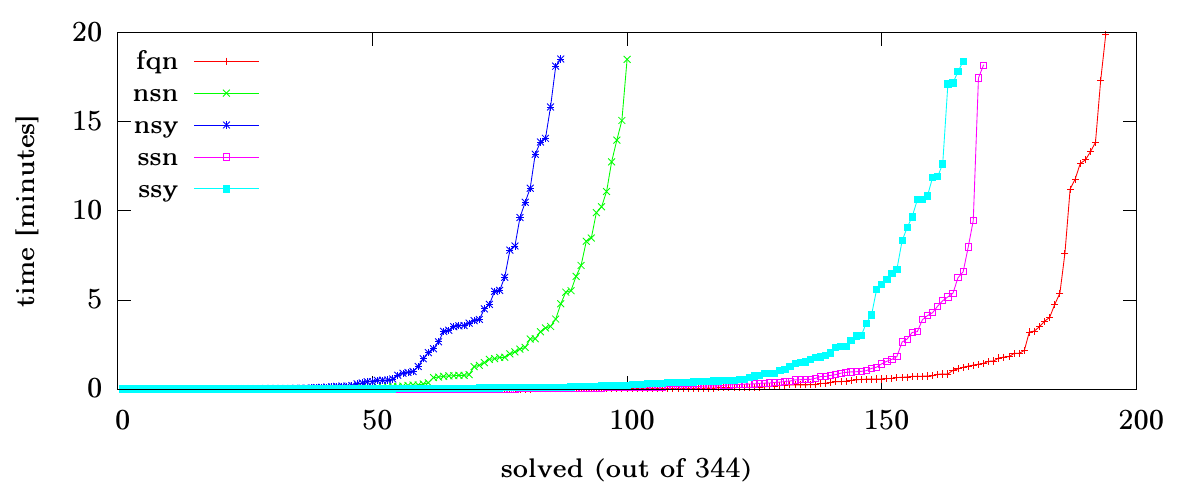}
\caption{Overall performance on the QBFEVAL~2012 benchmark}
\label{fig:cactus}
\end{figure}
\begin{table}[t]\centering
\caption{Number of solved instances out of~$344$, for several scenarios}
\label{table:solved}
\setlength{\tabcolsep}{.5em}
\smallskip
\scriptsize
\begin{tabular}{@{}llllrrrrr@{}}
\toprule
\multicolumn{4}{c}{Scenario}
  & \multicolumn{2}{c}{True/SAT}
  & \multicolumn{2}{c}{False/UNSAT} \\
\cmidrule(r){1-4} \cmidrule(lr){5-6} \cmidrule(lr){7-8}
Name & Preprocessing & Solving & Tracing & Unchecked & Checked & Unchecked & Checked & Total \\ \midrule
fqn & full    & qdag    & no  & 99 & n/a & 94 & n/a & 194 \\
nsn & none    & simple  & no  & 42 & n/a & 58 & n/a & 100 \\
nsy & none    & simple  & yes &  7 &  25 &  0 &  55 &  87 \\
ssn & simple  & simple  & no  & 80 & n/a & 90 & n/a & 170 \\
ssy & simple  & simple  & yes &  8 &  69 &  0 &  89 & 166 \\ \bottomrule
\end{tabular}
\end{table}

\autoref{fig:threshold} shows that preprocessing is beneficial mostly for hard instances.
\autoref{fig:threshold.size} depicts certificate size with preprocessing (ssy) versus certificate size without preprocessing (nsy).
There is a clear threshold around~$10^5$: above it preprocessing helps, below it preprocessing is detrimental.
\autoref{fig:threshold.time} depicts time spent in the solver versus total solving time (which includes preprocessing and postprocessing) for the three scenarios that use preprocessing.
There is a clear threshold around $2$~minutes: above it, scenarios that do not generate certificates (fqn, ssn) have negligible overhead.

\begin{figure}[t]\centering
\begin{subfigure}{6cm}\centering
\includegraphics{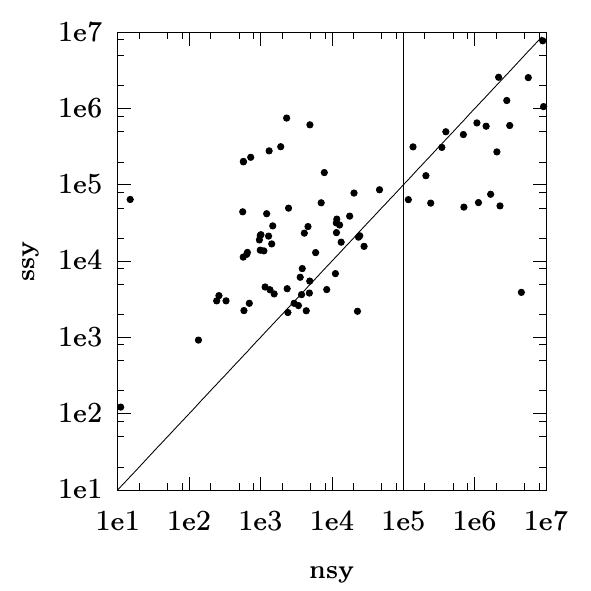}
\caption{Certificate size}
\label{fig:threshold.size}
\end{subfigure}%
\hfil%
\begin{subfigure}{6cm}\centering
\includegraphics{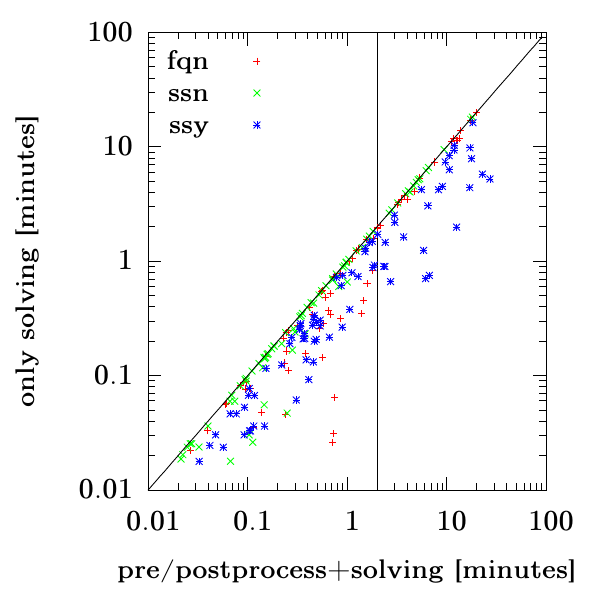}
\caption{Solving time}
\label{fig:threshold.time}
\end{subfigure}
\caption{The effect of pre/postprocessing on certificate size and on solving time}
\label{fig:threshold}
\end{figure}
\begin{table}[t]\centering
\caption{Time spent in solver as a percent of the total solving time.}
\label{table:overheads}
\setlength{\tabcolsep}{.5em}
\smallskip
\def\.{\phantom{~[\%]}}
\begin{tabular}{@{}lrrrr@{}}
\toprule
Scenario & min [\%] & med [\%] & geom avg [\%] & max [\%] \\ \midrule
fqn    &   4\. &  91\. &  66\. & 100\. \\
ssn    &  19\. &  98\. &  86\. & 100\. \\
ssy    &  11\. &  57\. &  50\. &  92\. \\
\bottomrule
\end{tabular}
\end{table}

The correlation between certificate size and total running time is only moderate~($\approx0.6$).
As an example of the high variance, for the 10~instances that were solved in $64$-to-$128$~seconds, the average certificate size was $4.7\times10^5$, with a standard deviation of~$4.8\times10^5$.

\section {Conclusions and Future Work}

This paper brings together two different facets of QBF solving:
preprocessing and certification. Certification is important for
practical applications of QBF and preprocessing is crucial for
performance of nowadays QBF solvers. Both of the facets were
extensively investigated%
~\cite{GiunchigliaEtAlJAIR06,DBLP:conf/cp/Gelder12,DBLP:conf/sat/NiemetzPLSB12,%
  DBLP:conf/sat/Biere04,DBLP:conf/cp/SamulowitzDB06,DBLP:conf/cade/BiereLS11,DBLP:conf/sat/GiunchigliaMN10}
but there is no available toolchain combining the two. However, the
need for such technology has been recognized by others~\cite{qbfgallery2013}.
This paper addresses exactly this
deficiency. For a number of representative preprocessing techniques,
the paper shows how certificates can be reconstructed from a
certificate of a preprocessed formula. Experimental evaluation of the
implemented prototype demonstrates that the proposed
techniques enable QBF solving \emph{with}
certification that is performance-wise very close to a
state-of-the-art QBF solving \emph{without} certification.  Hence, the
contribution of the paper is not only theoretical but also practical
since the implemented tool will be useful to the QBF community.

On the negative side, the paper demonstrates that current methods of
QBF certification are insufficient for full-fledged preprocessing
in the case of true formulas. Namely, term-resolution+model-generation
proofs incur worst-case exponential blowup in blocked clause
elimination and variable elimination. This is an important drawback
because term-resolution proofs can be checked in polynomial time,
which is not the case for model-based certification (used in the
paper). This drawback delimits one direction for future work: Can we
produce polynomially-verifiable certificates for true QBFs in the
context of preprocessing? Another item of future work is narrowing the
performance gap between solving with and without certificate
generation. In this regard, methods for certifying universal-variable
expansion should be developed~\cite{BubeckBuningSAT07} and other
techniques, such as hyper-binary resolution, must be certified.

Last but not least, methods for solving QBF were generalized to
domains such as SMT or
verification~\cite{cheng2013,DBLP:conf/ifm/MorgensternGS13}.  We may
expect that the contributions made by this paper will also be helpful
for these works.

  \section*{Acknowledgments}
  We thank Armin Biere and Allen Van Gelder for helpful conversations on QBF
  and preprocessing.
  This work is partially supported by SFI PI grant BEACON (09/IN.1/I2618),
  FCT grants ATTEST (CMU-PT/ELE/0009/2009), POLARIS (PTDC/EIA-CCO/123051/2010),
  and INESC-ID's multiannual PIDDAC funding PEst-OE/EEI/LA0021/2011.
  Grigore was supported by the EPSRC Programme Grant `Resource Reasoning' (EP/H008373/2).
\bibliographystyle{splncs03}
\bibliography{refs}
\end{document}